\newtheorem{Theorem}{Theorem}[section]
\newtheorem{lem}[Theorem]{Lemma}
\newtheorem{Remark}[Theorem]{Remark}
\newtheorem{Corollary}[Theorem]{Corollary}
\newtheorem{Proposition}[Theorem]{Proposition}
\newtheorem{Example}[Theorem]{Example}
\numberwithin{equation}{section}
\begin{document}
\title{Three new classes of optimal  frequency-hopping sequence sets\footnote{
 E-Mail addresses: bocong\_chen@yahoo.com (B. Chen), L\_R\_Lin86@163.com (L. Lin), lingsan@ntu.edu.sg  (S. Ling), hwliu@mail.ccnu.edu.cn (H. Liu).}}

\author{Bocong Chen$^{1,2}$, ~Liren Lin$^3$, ~San Ling$^2$, ~Hongwei Liu$^3$}

\date{\small
${}^1$School of Mathematics, South China University of Technology, Guangzhou, Guangdong, 510641, China\\
${}^2$Division of Mathematical Sciences, School of Physical \&
Mathematical Sciences,
         Nanyang Technological University, Singapore 637616, Singapore\\
${}^3$School of Mathematics and Statistics,
Central China Normal University,
Wuhan, Hubei, 430079, China\\         }

\maketitle

\begin{abstract}

The study of frequency-hopping sequences (FHSs) has been focused on the establishment of
theoretical bounds for the parameters of FHSs as well as on  the construction of
optimal FHSs with respect to the bounds. Peng and Fan (2004) derived two lower bounds on the maximum nontrivial Hamming correlation
of an FHS set, which is    an important indicator in measuring the performance of an FHS set employed in practice.

In this paper, we obtain two main results. We study the construction of new optimal frequency-hopping sequence sets by using cyclic codes over finite fields.
Let $\mathcal{C}$ be a cyclic code of length $n$ over a finite field $\mathbb{F}_q$ such that $\mathcal{C}$
contains the one-dimensional   subcode
$
\mathcal{C}_0=\{(\alpha,\alpha,\cdots,\alpha)\in \mathbb{F}_q^n\,|\,\alpha\in \mathbb{F}_q\}.
$
Two codewords of $\mathcal{C}$ are said to be equivalent if
one can be obtained from the other through applying the cyclic shift a certain number of times.
We present a necessary and sufficient condition under which  the equivalence class of any codeword  in
$\mathcal{C}\setminus\mathcal{C}_0$ has size $n$.  This result   addresses an open question
raised by Ding {\it et al.} in \cite{Ding09}.
As a consequence, three new classes of optimal   FHS  sets with respect to
the Singleton bound are obtained,
some of which  are also optimal with respect to the Peng-Fan bound at the same time. We also show that the two Peng-Fan bounds
are, in fact, identical.

\medskip
\textbf{Keywords:} Frequency-hopping sequence set,   cyclic code, maximum distance separable (MDS) code, cyclotomic coset.

\medskip
\textbf{2010 Mathematics Subject Classification:}~94A55,  94B05.
\end{abstract}

\section{Introduction}

Let $\ell$ be a positive integer and let $\mathbb{F}=\{f_0, f_1, \cdots, f_{\ell-1}\}$ be an alphabet of $\ell$ available frequencies.
A sequence $X=\{x_t\}_{t=0}^{n-1}$ is called  a {\it frequency-hopping sequence} (FHS) of length $n$ over $\mathbb{F}$ if $x_t\in \mathbb{F}$ for all $0\leq t\leq n-1$.
For
two FHSs  $X=\{x_t\}_{t=0}^{n-1}$ and $Y=\{y_t\}_{t=0}^{n-1}$ of length $n$ over $\mathbb{F}$,
if $x_t=y_t$ for all $0\leq t\leq n-1$,  then we say $X=Y$.
Let $\mathcal{S}$ be the set
of all FHSs of length $n$ over $\mathbb{F}$.
Any subset of $\mathcal{S}$ is called an {\it FHS set}. For any $X,Y\in \mathcal{S}$,
their {\it Hamming
correlation} is defined by
$$
H_{X,Y}(t)=\sum\limits_{i=0}^{n-1}h[x_i,y_{i+t}],~~ 0\leq t\leq n-1
$$
where $h[a,b]=1$ if $a=b$ and $0$ otherwise, and the subscript addition is taken modulo $n$.
 For any  distinct
$X, Y\in \mathcal{S}$, we have the following    measures,
$$
H(X)=\max\limits_{1\leq t<n}\big\{H_{X,X}(t)\big\}
$$
and
$$
H(X,Y)=\max\limits_{0\leq t<n}\big\{H_{X,Y}(t)\big\}.
$$
Let $\mathcal{F}$ be an FHS set  containing $N$ elements. The {\it maximum
nontrivial Hamming correlation} of the FHS  set  $\mathcal{F}$ is defined by
$$
M(\mathcal{F})=\max\Big\{\max\limits_{X\in \mathcal{F}}H(X), \max\limits_{X,Y\in \mathcal{F},X\neq Y}H(X,Y)\Big\}.
$$
As in \cite{Ding09}, we use $(n,N,\lambda;\ell)$   to denote the FHS set $\mathcal{F}$ with $N$ elements  of length $n$ over an alphabet of size $\ell$, where $\lambda=M(\mathcal{F})$.

Frequency-hopping spread spectrum  techniques
have been widely used  in modern communication systems, such as ultrawideband communications,
military communications,
Wi-Fi and  Bluetooth. In these systems, the receivers  may be  confronted with the interference
caused by undesired signals.
It is often
desirable   to properly select   frequency-hopping sequences (FHSs)   or   FHS  sets to mitigate the interference.
In order to evaluate the goodness of  an FHS set, the maximum
nontrivial Hamming
correlation   is   an important indicator.
In general, it is desirable  to construct an    FHS set   with a large
set size and a low Hamming correlation value, when its length and the
number of available frequencies are fixed. However, the parameters of an   FHS set are not  independent of one another, and they
are subjected to certain theoretical limits (e.g.,  see  \cite{Ding09}, \cite{Han}, \cite{Lempel}, \cite{Peng}, \cite{Yang}, \cite{Zhang}).
Lempel and Greenberger obtained a lower bound on $H(X)$ of any individual FHS $X$ in 1974 \cite{Lempel}.
Extending the Lempel-Greenberger bound,  Peng and Fan derived the following lower bounds on the maximum
nontrivial Hamming correlation of an
FHS set (\cite[Corollary 1]{Peng}).

For a real number $a$, let
$\lceil a\rceil$   denote the least integer not less
than $a$ and let
$\lfloor a\rfloor$ denote the integer part of $a$.
\begin{lem} (Peng-Fan bounds, \cite{Peng})
Let   $\mathcal{F}\subseteq \mathcal{S}$  be a set of $N$ sequences of length $n$ over an alphabet of size $\ell$. Define
$I=\lfloor nN/\ell\rfloor$. Then
\begin{equation}\label{PF1}
M(\mathcal{F})\geq\left\lceil\frac{(nN-\ell)n}{(nN-1)\ell}\right\rceil
\end{equation}
and
\begin{equation}\label{PF}
M(\mathcal{F})\geq\left\lceil\frac{2InN-(I+1)I\ell}{(nN-1)N}\right\rceil.
\end{equation}
\end{lem}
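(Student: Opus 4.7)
The plan is to prove both bounds via a double-counting of the total Hamming correlation
\[ S \;:=\; \sum_{X,Y\in\mathcal{F}}\sum_{t=0}^{n-1} H_{X,Y}(t). \]
For each $X\in\mathcal{F}$ and $f_j\in\mathbb{F}$, write $a_{X,j}$ for the number of indices $i$ with $x_i=f_j$, and set $b_j:=\sum_{X\in\mathcal{F}} a_{X,j}$. Interchanging the order of summation in the definition of $S$, one checks that $\sum_{t=0}^{n-1}H_{X,Y}(t)=\sum_{j=0}^{\ell-1}a_{X,j}a_{Y,j}$, and hence
\[ S\;=\;\sum_{j=0}^{\ell-1}b_j^{\,2}, \qquad \sum_{j=0}^{\ell-1} b_j\;=\;nN. \]
On the other hand, among the $nN^2$ triples $(X,Y,t)$ defining $S$, exactly the $N$ ``trivial'' ones with $X=Y$ and $t=0$ contribute $H_{X,X}(0)=n$ each, while every one of the remaining $nN^2-N$ terms is bounded by $M(\mathcal{F})$. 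This yields
\[ S\;\leq\; nN+N(nN-1)\,M(\mathcal{F}). \]

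To obtain (\ref{PF1}), I would combine these two expressions with the Cauchy--Schwarz lower bound $\sum_j b_j^{\,2}\geq (nN)^2/\ell$, which is forced by $\sum_j b_j=nN$. Solving the resulting inequality for $M(\mathcal{F})$, simplifying, and invoking $M(\mathcal{F})\in\mathbb{Z}$ to pass to the ceiling on the right-hand side reproduces (\ref{PF1}) exactly.

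For the sharper bound (\ref{PF}), the idea is to exploit the integrality of the $b_j$. Writing $nN=I\ell+r$ with $0\leq r<\ell$, a standard exchange argument shows that $\sum_j b_j^{\,2}$ is minimized, subject to $b_j\in\mathbb{Z}_{\geq 0}$ and $\sum_j b_j=nN$, when $r$ of the $b_j$'s equal $I+1$ and the remaining $\ell-r$ equal $I$. A direct computation of this minimum gives
\[ \sum_{j=0}^{\ell-1} b_j^{\,2}\;\geq\; \ell I^2+r(2I+1)\;=\;2InN-I(I+1)\ell+nN. \]
Substituting into the upper bound on $S$, cancelling the $nN$ terms, dividing by $N(nN-1)$, and again taking ceilings reproduces (\ref{PF}).

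The only genuine obstacle is the integer-minimization step for (\ref{PF}); the Cauchy--Schwarz step and all subsequent manipulations are routine. One should also take care to verify that the $N$ trivial contributions are isolated exactly once so that the upper bound on $S$ is tight, and that each rational bound produced lies below the integer $M(\mathcal{F})$ so that passing to $\lceil\cdot\rceil$ is legitimate.
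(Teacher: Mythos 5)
The paper does not prove this lemma at all: it is quoted directly from Peng and Fan's 2004 paper, so there is no in-paper argument to compare against. Your blind proof is, however, correct and complete, and it is essentially the standard (Peng--Fan) derivation. The key identity $\sum_{t=0}^{n-1}H_{X,Y}(t)=\sum_{j}a_{X,j}a_{Y,j}$ holds because for each pair of positions $(i,i')$ there is exactly one shift $t$ with $i'\equiv i+t \pmod n$; summing over $X,Y$ then gives $S=\sum_j b_j^2$ with $\sum_j b_j=nN$. The upper bound $S\leq nN+N(nN-1)M(\mathcal{F})$ is right: the $N$ triples with $X=Y$, $t=0$ contribute $n$ each, and every other triple is controlled by $H(X)$ or $H(X,Y)$, hence by $M(\mathcal{F})$. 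Combining with $\sum_j b_j^2\geq (nN)^2/\ell$ yields $M(\mathcal{F})\geq n(nN-\ell)/((nN-1)\ell)$, which is (\ref{PF1}); combining instead with the integer minimum $\sum_j b_j^2\geq \ell I^2+r(2I+1)=2InN-I(I+1)\ell+nN$ (where $nN=I\ell+r$, and the minimum is attained when the $b_j$ pairwise differ by at most one, by the exchange argument you describe) yields (\ref{PF}) after cancelling $nN$. The final passage to ceilings is legitimate since $M(\mathcal{F})$ is a nonnegative integer. The one step you flag as a ``genuine obstacle,'' the integer minimization, is indeed the only nontrivial ingredient, and your exchange argument handles it; I see no gap.
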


\begin{Remark}{\rm
Yang {\it et al.}  compared the above two Peng-Fan bounds in \cite{Yang}; it was shown   that the
Peng-Fan bound of  (\ref{PF}) may be tighter than that of (\ref{PF1}). However, the authors failed to find   examples
where   the bound of  (\ref{PF}) is strictly  tighter than that of  (\ref{PF1}), and finally suggested  that the exact relationship between the
bounds (\ref{PF1}) and (\ref{PF}) needs to be studied further.
In  this paper, we show that  the two Peng-Fan bounds
are, in fact, identical. (See Theorem \ref{p-f-iden} below. It is reasonable to assume that
$nN\geq\ell$ and its proof is deferred to the Appendix.)}
\end{Remark}
\begin{Theorem}\label{p-f-iden}
Let   $\mathcal{F}\subseteq \mathcal{S}$  be a set of $N$ sequences of length $n$ over an alphabet of size $\ell$. Define
$I=\lfloor nN/\ell\rfloor$. If $nN\geq\ell$, then
\begin{equation*}
M(\mathcal{F})\geq\left\lceil\frac{(nN-\ell)n}{(nN-1)\ell}\right\rceil=\left\lceil\frac{2InN-(I+1)I\ell}{(nN-1)N}\right\rceil.
\end{equation*}
\end{Theorem}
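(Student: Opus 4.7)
The plan is to write both Peng--Fan bounds over the common denominator $D = N\ell(nN - 1)$. Setting $nN = I\ell + r$ with $0\le r < \ell$, the first bound becomes $U/D$ and the second $V/D$, where $U = (nN-\ell)nN = ((I-1)\ell + r)(I\ell + r)$ and $V = I\ell\bigl((I-1)\ell + 2r\bigr)$. A direct expansion yields $V - U = r(\ell - r) \ge 0$, so $\lceil U/D \rceil \le \lceil V/D \rceil$ is automatic. Since the boundary case $nN = \ell$ forces $r = 0$ and $U = V = 0$, I may assume $nN > \ell$ from now on, in which case the reverse inequality $\lceil V/D\rceil \le \lceil U/D\rceil$ is equivalent to $U \bmod D \le D - r(\ell - r)$.

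To evaluate $U \bmod D$ explicitly I will use the identity
\[
U = (nN - \ell + 1)(nN - 1) - (\ell - 1),
\]
verified by expansion, combined with the Euclidean division $nN - \ell + 1 = qN\ell + s$ with $0 \le s < N\ell$. Substituting gives $U = qD + s(nN - 1) - (\ell - 1)$, and a short calculation identifies $s = u'\ell + (r+1)$, where $u' = (I-1) \bmod N$, except in the wraparound case $u' = N - 1$ and $r = \ell - 1$, when $s = 0$.

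The argument then splits into these two subcases. If $s = 0$, then $r = \ell - 1$ and $r(\ell - r) = \ell - 1$; a direct check shows $V = qD$ as well, so both ceilings equal $q$. If $s \ge 1$, one has $U \bmod D = s(nN - 1) - (\ell - 1)$, and the desired inequality simplifies, after setting $v = \ell - 1 - r$ and $w = N - 1 - u'$, to
\[
v(\ell - 2 - v) \le (w\ell + v)(nN - 1).
\]
Under the hypothesis $nN \ge \ell$ this is easy: the left-hand side is at most $v(\ell - 2)$ for $1 \le v \le \ell - 2$ and is nonpositive for $v \in \{0, \ell - 1\}$, while the right-hand side is at least $v(nN - 1) \ge v(\ell - 1)$.

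I expect the main obstacle to be spotting the identity $U = (nN - \ell + 1)(nN - 1) - (\ell - 1)$, which is what unlocks a closed-form description of $\lceil U/D \rceil$; without such a formula, one has no clear way to rule out an integer multiple of $D$ slipping into the interval $(U, V]$. The case analysis that follows is routine arithmetic, but it is crucial that the extremal case $r = \ell - 1$ of $r(\ell - r) = \ell - 1$ coincides precisely with the boundary $s = 0$; this coincidence is exactly what forces the two ceilings to match.
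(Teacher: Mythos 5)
Your proof is correct, and it takes a genuinely different route from the one in the paper's Appendix. Both arguments begin the same way, putting the two bounds over the common denominator $D=N\ell(nN-1)$ and computing $V-U=J(\ell-J)$ (your $r$ is the paper's $J$), and both dispose of the case $\ell\mid nN$ immediately. From there the paper proceeds by a case analysis on the size of $n$ relative to $\ell$: when $n\le\ell$ it shows both ceilings equal $1$ via a discriminant computation, and when $n>\ell$ it writes $n=s\ell+r$ and $rN=t\ell+J$, splits into two subcases according to the fractional part of $PF1$, and finishes the harder subcase by an induction on $t$. You instead extract a closed form for $U\bmod D$ from the identity $U=(nN-\ell+1)(nN-1)-(\ell-1)$ combined with the division $nN-\ell+1=qN\ell+s$, reducing the whole question to the single inequality $v(\ell-2-v)\le(w\ell+v)(nN-1)$; this is uniform in $n$ versus $\ell$ and needs no induction, so it is arguably cleaner, at the price of having to find the unlocking identity. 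I checked the computations: $s=u'\ell+(r+1)$ with the stated wraparound, $U\bmod D=s(nN-1)-(\ell-1)$ is indeed in $(0,D)$ for $s\ge1$ and $nN>\ell$, and $V=qD$ when $s=0$. One small point to tighten: the asserted equivalence ``$\lceil V/D\rceil\le\lceil U/D\rceil$ iff $U\bmod D\le D-r(\ell-r)$'' is false in the degenerate situation $U\bmod D=0$ with $r(\ell-r)>0$ (there the left side fails while the right side holds); your subsequent analysis shows this situation never arises, since for $r\ge1$ and $nN>\ell$ you always get $U\bmod D>0$, but the equivalence as stated should carry that qualification.
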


Besides the bounds on the Hamming correlation,
several  bounds on the size of an FHS set were also established.
Ding {\it et al.} in \cite{Ding09} obtained a number of bounds on the size of  an   FHS set
from certain classical bounds in  coding theory.

\begin{lem}
(Sphere-packing bound on the size of FHS sets, \cite{Ding09})
For any $(n, N, \lambda;\ell)$ FHS set $\mathcal{F}$,
where $\lambda<n$ and $\ell>1$, we have
\begin{equation}\label{sbound}
N\leq\left\lfloor\frac{\ell^n}{n\Big(\sum_{i=0}^{\left\lfloor (n-\lambda-1)/2\right\rfloor}{n \choose i}(\ell-1)^i\Big)}\right\rfloor.
\end{equation}
\end{lem}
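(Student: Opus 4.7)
The plan is to reduce the claim to the classical sphere-packing (Hamming) bound from coding theory, by ``inflating'' the FHS set using all of its cyclic shifts. For each $X\in\mathcal{F}$ and each $0\le s\le n-1$, let $X^{(s)}$ denote the cyclic shift of $X$ by $s$ positions, and form
$$\widetilde{\mathcal{F}}=\bigl\{X^{(s)}\,:\,X\in\mathcal{F},\ 0\le s\le n-1\bigr\}.$$
The key observation is that for any $X,Y\in\mathcal{F}$ and any shifts $s,t$, the number of coordinates at which $X^{(s)}$ and $Y^{(t)}$ agree is exactly $H_{X,Y}(t-s)$, reducing the argument to the hypothesis on the Hamming correlations.

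First I would verify that the $nN$ elements of $\widetilde{\mathcal{F}}$ are pairwise distinct. If two of them coincided, the corresponding agreement count would equal $n$; but, since $\lambda<n$, the bound $H(X)\le\lambda$ (for $X=Y$, $s\neq t$) or $H(X,Y)\le\lambda$ (for $X\neq Y$) rules this out. I would then check that the Hamming distance between any two distinct members of $\widetilde{\mathcal{F}}$ is at least $n-\lambda$: the number of agreements is bounded by $\lambda$ in every case, so the number of disagreements is at least $n-\lambda$. Consequently, $\widetilde{\mathcal{F}}$ is an (unstructured) code of length $n$, size $nN$, and minimum distance $d\ge n-\lambda$ over the alphabet $\mathbb{F}$ of size $\ell$.

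Finally I would apply the sphere-packing bound: any code $\mathcal{C}\subseteq\mathbb{F}^n$ of minimum distance $d$ satisfies
$$|\mathcal{C}|\cdot\sum_{i=0}^{\lfloor(d-1)/2\rfloor}\binom{n}{i}(\ell-1)^i\le\ell^n,$$
since the Hamming balls of radius $\lfloor(d-1)/2\rfloor$ around the codewords are pairwise disjoint and each has the stated cardinality. Plugging in $|\mathcal{C}|=nN$ and using $\lfloor(d-1)/2\rfloor\ge\lfloor(n-\lambda-1)/2\rfloor$, then dividing by $n$ and taking the floor (since $N$ is a positive integer), yields the inequality \eqref{sbound}.

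The only nontrivial step is the translation of the FHS data into a genuine code with controlled minimum distance; once the correspondence between $H_{X,Y}(t-s)$ and coordinate-wise agreement on cyclic shifts is verified, the remainder is a direct invocation of a textbook bound. The hypotheses $\lambda<n$ and $\ell>1$ play precisely the roles of ensuring that $\widetilde{\mathcal{F}}$ has exactly $nN$ distinct codewords and that the Hamming-ball volume formula is meaningful.
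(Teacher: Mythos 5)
The paper states this lemma without proof, quoting it directly from the cited reference of Ding \emph{et al.}, so there is no in-paper argument to compare against. Your proposal correctly reconstructs the standard derivation used there: the $nN$ cyclic shifts are pairwise distinct and form a code of minimum distance at least $n-\lambda$ precisely because every agreement count between two distinct shifts equals some $H_{X,Y}(\cdot)$ bounded by $\lambda<n$, and the classical Hamming bound with radius $\lfloor(n-\lambda-1)/2\rfloor\le\lfloor(d-1)/2\rfloor$ then gives $nN\sum_{i}\binom{n}{i}(\ell-1)^i\le\ell^n$, from which the stated floor follows since $N$ is an integer. I see no gaps.
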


\begin{lem}
(Singleton bound on the size of  FHS sets, \cite{Ding09})
For any $(n, N, \lambda;\ell)$ FHS set $\mathcal{F}$,
where $\lambda<n$ and $\ell>1$, we have
\begin{equation}\label{sinbound}
N\leq\left\lfloor\frac{\ell^{\lambda+1}}{n}\right\rfloor.
\end{equation}
\end{lem}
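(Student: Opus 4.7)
The plan is to reduce this bound to the classical Singleton bound for block codes. Given an $(n,N,\lambda;\ell)$ FHS set $\mathcal{F}$ over an alphabet $\mathbb{F}$ of $\ell$ symbols, I would assemble an auxiliary code $C \subseteq \mathbb{F}^n$ by collecting every cyclic shift of every sequence in $\mathcal{F}$. Concretely, for $X = (x_0, \dots, x_{n-1}) \in \mathcal{F}$ and $0 \le t < n$, write $X^{(t)} = (x_t, x_{t+1}, \dots, x_{t+n-1})$ with indices modulo $n$, and set $C = \{X^{(t)} : X \in \mathcal{F},\ 0 \le t < n\}$.

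The first step is to verify that these $nN$ shifts are pairwise distinct, so that $|C| = nN$. Suppose $X^{(s)} = Y^{(t)}$, and let $r = t-s \bmod n$. Then $x_i = y_{i+r}$ for every $i$, which is exactly the statement $H_{X,Y}(r) = n$. Because $M(\mathcal{F}) = \lambda < n$, this equation cannot hold with $X \ne Y$ (otherwise $H(X,Y) \ge n > \lambda$) nor with $X = Y$ and $r \ne 0$ (otherwise $H(X) \ge n > \lambda$). Hence $X = Y$ and $s = t$, proving distinctness.

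The second step is to lower-bound the minimum Hamming distance of $C$. For any two distinct codewords $X^{(s)}$ and $Y^{(t)}$, the number of coordinates at which they agree is precisely $H_{X,Y}(t - s \bmod n)$, which is bounded above by $M(\mathcal{F}) = \lambda$. Consequently their Hamming distance is at least $n - \lambda$. Thus $C$ is a code of length $n$, size $nN$, and minimum distance $d \ge n - \lambda$ over an alphabet of size $\ell$. Invoking the classical Singleton bound $|C| \le \ell^{n-d+1}$ yields $nN \le \ell^{\lambda+1}$, and since $N$ is a positive integer, dividing by $n$ and taking the floor delivers the claimed inequality $N \le \lfloor \ell^{\lambda+1}/n \rfloor$.

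The argument is essentially a dictionary between the Hamming correlations of FHSs and the Hamming distances of their cyclic orbits, capped by a direct appeal to a well-known coding-theoretic bound. There is no real obstacle; the only point that deserves care is the distinctness argument for the $nN$ shifts, which makes essential use of the hypothesis $\lambda < n$.
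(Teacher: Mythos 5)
Your proof is correct and is essentially the same argument used in the cited source \cite{Ding09}, where the paper takes this lemma from: one forms the code $C$ of all $nN$ cyclic shifts, checks via $\lambda < n$ that they are distinct with pairwise Hamming distance at least $n-\lambda$, and applies the classical Singleton bound $|C|\leq \ell^{n-d+1}$ to get $nN\leq \ell^{\lambda+1}$. Your handling of the one delicate point — ruling out $X^{(s)}=Y^{(t)}$ both for $X\neq Y$ and for $X=Y$ with $r\neq 0$ — is exactly right.
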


An  FHS set is called {\it optimal} if  one of the bounds  (\ref{PF1})-(\ref{sinbound}) is met.
It is of great interest to construct optimal FHS sets with respect to the bounds.
In recent years, numerous   constructions of
optimal      FHS sets     have been proposed
(e.g., see \cite{Cai}-\cite{Ge}, \cite{YKHan}, \cite{Liu}, \cite{Ren},  \cite{Yang2}-\cite{Zhou2}, and references therein).
Ding {\it et al.}  generalized the   ideas in \cite{Song} to construct optimal   FHS sets    by
using some special classes of cyclic  codes \cite{Ding09}. This idea was further investigated in \cite{Ding10} to obtain more optimal
FHS sets.    As shown in \cite{Ding09}, there is a natural equivalence relation   defined on any cyclic   code:
two codewords of a cyclic code are said to be equivalent if
 one can be obtained from the other through applying the cyclic shift a certain number of times.
A special class of cyclic codes $\mathcal{C}_{(q,m)}$ of length $n=(q^m-1)/(q-1)$ over a finite field $\mathbb{F}_q$   containing
$
\mathcal{C}_0=\{(\alpha,\alpha,\cdots,\alpha)\in \mathbb{F}_q^n\,|\,\alpha\in \mathbb{F}_q\}
$
was discussed in \cite{Ding09}. It was shown in \cite{Ding09} that
if the code length $n$ is a prime number, then  the equivalence class of any codeword  in
$\mathcal{C}_{(q,m)}\setminus\mathcal{C}_0$ has size $n$;
an  FHS set is thus obtained by taking exactly one element from  every equivalence class of $\mathcal{C}_{(q,m)}\setminus\mathcal{C}_0$,  which turns out  to be optimal with respect to  the sphere-packing bound (\ref{sbound}).
A natural  open question posed in \cite[p.3302]{Ding09} is whether the prime-length  constraint can be dropped   without changing the situation that
the equivalence class of any codeword  in $\mathcal{C}_{(q,m)}\setminus\mathcal{C}_0$ has size $n$.

In  this paper, we further explore the above idea   to   construct more optimal   FHS sets
by using  maximum distance separable (MDS)   cyclic codes.
Let $\mathbb{F}_q$ be the finite field with $q$ elements and let $n$ be a positive integer co-prime to $q$.
Assume that $\mathcal{C}$ is a
cyclic code of length $n$ over $\mathbb{F}_q$ containing $\mathcal{C}_0$,
where $\mathcal{C}_0=\{(\alpha,\alpha,\cdots,\alpha)\in \mathbb{F}_q^n\,|\,\alpha\in \mathbb{F}_q\}$.
In Section $3$, we present a necessary and sufficient condition
under which the equivalence class of any codeword  in
$\mathcal{C}\setminus\mathcal{C}_0$ has size $n$.
This result   addresses the aforementioned open question. Actually, it turns out that  the prime-length  constraint
is necessary and cannot be dropped (see Corollary \ref{open} in Section $3$).
In Section $4$, using the results  in Section $3$,  we obtain three   new classes of  optimal    FHS sets with respect to
the Singleton bound  (\ref{sinbound}),
some of which  are also optimal with respect to the Peng-Fan bound     (\ref{PF}) at the same time.
More precisely, the parameters of   the new   FHS sets are given as follows:
\begin{itemize}
\item[{(i)}]
$$\Big(q+1,\frac{q^{2k+1}-q}{q+1},2k;q\Big)$$
where $q=2^m$ with $m>1$ being a positive integer, and where
$1\leq k\leq\min\{p-1,2^{m-1}\}$ with $p$ being the smallest prime divisor of $q+1$.
This     FHS set is optimal with respect to the Singleton bound  (\ref{sinbound}).

\item[{(ii)}]
$$\Big(q+1,q(q-1),2;q\Big)$$
where $q$ is an odd  prime power.
The parameters of this   FHS set
meet both the Peng-Fan bound     (\ref{PF}) and the Singleton  bound  (\ref{sinbound}) at the same time.

\item[{(iii)}]
$$\Big(n,(q^{2k+2}-1)/n,2k+1;q\Big)$$
where
$n>1$ is an odd divisor of $q+1$   with $q$ being a prime power,
and where $k$ is an integer such that  $0\leq k\leq(n-3)/2-M$ with
$M\le (n-3)/2$ being the largest integer such that $\gcd(M,n)>1$. This    FHS set is optimal with respect to the Singleton  bound (\ref{sinbound}).
In particular, by taking $k=0$,
we have an  FHS set with parameters
$$\Big(n,(q^{2}-1)/n,1;q\Big)$$
which meet both the Peng-Fan bound     (\ref{PF}) and the Singleton  bound (\ref{sinbound}).
\end{itemize}

\section{Some Facts about Cyclic Codes }
In this section, we review some basic notation and results
about cyclic codes over finite fields. For the details, the reader is referred to  \cite{Ling} or \cite{Mabook}.

Let $\mathbb{F}_q$ be the finite field with $q$ elements and let $n$ be a positive integer co-prime to $q$.
A linear code $\mathcal{C}$ of length $n$ over $\mathbb{F}_q$ is called {\em cyclic}
if it is an ideal of $\mathbb{F}_q[x]/\langle x^n-1\rangle$.
It follows that any cyclic code $\mathcal{C}$ of length $n$ over $\mathbb{F}_q$ is generated uniquely
by a monic divisor $g(x)\in \mathbb{F}_q[x]$ of $x^n-1$, which is referred to as the
{\em generator polynomial},
and $h(x)=(x^n-1)/g(x)$ is called its {\em parity-check polynomial}.
We then know that the irreducible factors
of $x^n-1$ in $\mathbb{F}_q[x]$ determine  all  cyclic codes
of length $n$ over $\mathbb{F}_q$.
Theoretically,  the irreducible factors of $x^n-1$ in $\mathbb{F}_q[x]$ can be derived  by the $q$-cyclotomic cosets modulo $n$.
For any integer $t$, the {\em $q$-cyclotomic coset $C_t$ of $t$ modulo $n$} is
defined  by
$$
C_t=\Big\{tq^j~(\bmod~{n})\,\Big{|}\,t=0,1,\cdots\Big\}.
$$
Take $\alpha$ (maybe in some extension field of $\mathbb{F}_q$)  to be a primitive $n$-th root of unity, which means that
$n$ is the smallest positive integer such that $\alpha^n=1$.

Let $C_0=\{0\}, C_{i_1}, C_{i_2}, \cdots, C_{i_t}$ be all the distinct $q$-cyclotomic
cosets modulo $n$. It is well known that
$$
x^n-1=\big(x-1\big)M_1(x)M_2(x)\cdots M_{t}(x)
$$
with
$$
M_{j}(x)=\prod\limits_{s\in C_{i_j}}\big(x-\alpha^{s}\big),~~~1\leq j\leq t,
$$
all being monic irreducible in $\mathbb{F}_q[x]$.
The {\em defining set} of   $\mathcal{C}=\langle g(x)\rangle$
is the subset of integers
$$
Z=\Big\{j\,\Big{|}\,0\leq j\leq n-1, g(\alpha^j)=0\Big\}.
$$
It is readily seen that the defining set $Z$ is a union of
$q$-cyclotomic cosets modulo $n$.

A linear code of length $n$ over $\mathbb{F}_q$ is called an $[n,k,d]$
code if its dimension
is $k$ and minimum (Hamming) distance is $d$.
The following results are  well known.
\begin{lem}(BCH bound for cyclic codes)\label{BCH}
Let $\mathcal{C}$ be a cyclic code of length $n$ over $\mathbb{F}_{q}$.
Let $\alpha$ be  a primitive $n$-th  root of unity in some extension field of $\mathbb{F}_{q}$.
Assume the generator polynomial of $\mathcal{C}$ has roots that include the set $\{\alpha^{i}\,|\, i_1\leq i\leq i_1+d-2\}$.
Then the minimum distance of $\mathcal{C}$ is at least $d$.
\end{lem}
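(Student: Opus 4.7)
The plan is to prove the BCH bound via the classical Vandermonde argument, arguing by contradiction that no nonzero codeword can have Hamming weight less than $d$.

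First I would set up the framework: suppose for contradiction that $\mathcal{C}$ contains a nonzero codeword $c(x)=c_0+c_1x+\cdots+c_{n-1}x^{n-1}$ of Hamming weight $w$ with $1\le w\le d-1$, and let $j_1<j_2<\cdots<j_w$ denote the positions of its nonzero coordinates. Since the generator polynomial $g(x)$ of $\mathcal{C}$ divides $c(x)$ and has $\alpha^{i_1},\alpha^{i_1+1},\ldots,\alpha^{i_1+d-2}$ among its roots, each of these $d-1$ elements is a root of $c(x)$. In particular, picking the first $w$ of these roots yields the homogeneous linear system
$$
\sum_{k=1}^{w} c_{j_k}\,(\alpha^{j_k})^{i_1+s}=0,\qquad s=0,1,\ldots,w-1.
$$

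Next I would analyze the coefficient matrix of this system. Factoring $(\alpha^{j_k})^{i_1}$ out of the $k$-th column shows that the determinant equals
$$
\Big(\prod_{k=1}^{w}\alpha^{j_k i_1}\Big)\cdot\det\!\big((\alpha^{j_k})^{s}\big)_{0\le s\le w-1,\;1\le k\le w},
$$
and the second factor is a Vandermonde determinant equal to $\prod_{1\le k<l\le w}(\alpha^{j_l}-\alpha^{j_k})$. Since $\alpha$ is a primitive $n$-th root of unity and the exponents $j_1,\ldots,j_w$ are distinct integers in $\{0,1,\ldots,n-1\}$, the elements $\alpha^{j_1},\ldots,\alpha^{j_w}$ are pairwise distinct, so the Vandermonde factor is nonzero; the leading factor is also nonzero as each $\alpha^{j_k}$ is a unit. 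Hence the coefficient matrix is invertible, forcing $c_{j_1}=\cdots=c_{j_w}=0$, which contradicts the choice of $j_1,\ldots,j_w$ as support positions of $c(x)$.

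Concluding, no nonzero codeword has weight strictly less than $d$, so the minimum distance of $\mathcal{C}$ is at least $d$. There is no real obstacle here: the argument is entirely routine once the contradiction setup and the Vandermonde factorization are in place. The only point worth stating carefully is that the distinctness of $\alpha^{j_1},\ldots,\alpha^{j_w}$ relies on $\alpha$ being a \emph{primitive} $n$-th root of unity, which is precisely the hypothesis of the lemma.
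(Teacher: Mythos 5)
Your proof is correct: the paper states the BCH bound as a well-known result without giving a proof (deferring to the standard references on cyclic codes), and your contradiction-plus-Vandermonde argument is precisely the standard textbook derivation, with the key points (using only the first $w\le d-1$ consecutive roots, factoring out $(\alpha^{j_k})^{i_1}$ column-wise, and invoking primitivity of $\alpha$ to get distinctness of the $\alpha^{j_k}$) all handled properly.
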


\begin{Proposition}\label{Singleton}
(Singleton  bound)   If $\mathcal{C}$ is  an $[n,k,d]$ linear code over $\mathbb{F}_{q}$, then
$d\leq n-k+1$.
\end{Proposition}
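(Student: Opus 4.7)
The plan is to prove the Singleton bound by a standard puncturing/projection argument. I would begin by fixing an $[n,k,d]$ linear code $\mathcal{C}$ over $\mathbb{F}_q$ and introducing the coordinate projection that deletes the last $d-1$ positions, namely $\pi:\mathbb{F}_q^n\to\mathbb{F}_q^{n-d+1}$, and then restricting $\pi$ to $\mathcal{C}$.

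The central observation is that $\pi|_{\mathcal{C}}$ is injective. Indeed, suppose $c,c'\in\mathcal{C}$ satisfy $\pi(c)=\pi(c')$. Then $c-c'$ is a codeword that is zero on the first $n-d+1$ coordinates, so its Hamming weight is at most $d-1$. By the minimum distance hypothesis the only codeword of weight strictly less than $d$ is the zero codeword, which forces $c=c'$.

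From injectivity I would conclude that $q^k=|\mathcal{C}|\leq|\mathbb{F}_q^{n-d+1}|=q^{n-d+1}$, and comparing exponents yields $k\leq n-d+1$, i.e., $d\leq n-k+1$, as required.

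There is no real obstacle here; the only point deserving care is the argument that $\pi|_{\mathcal{C}}$ is injective, which relies crucially on $\mathcal{C}$ being linear (so that $c-c'\in\mathcal{C}$) and on the definition of minimum distance. Since all of these are built into the hypothesis that $\mathcal{C}$ is an $[n,k,d]$ linear code, the proof should be short and direct.
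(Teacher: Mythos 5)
Your proof is correct: the puncturing/projection argument is the standard proof of the Singleton bound, and every step checks out — the difference $c-c'$ of two codewords agreeing on the first $n-d+1$ coordinates has weight at most $d-1$, hence is zero by the definition of minimum distance, so the projection is injective and $q^k\leq q^{n-d+1}$. The paper itself offers no proof of this proposition (it is stated as a well-known fact, with references to standard coding-theory texts), so there is nothing to compare against; your argument is a complete and appropriate justification.
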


A linear code achieving this   Singleton bound is called a  {\em maximum distance separable
(MDS) code}.
A remark is in order at this point.
Lemma \ref{BCH} and Proposition \ref{Singleton}  provide  a useful method to construct MDS cyclic codes:
If the generator polynomial of a cyclic code  $\mathcal{C}$   has roots precisely equal to  the set $\{\alpha^{i}\,|\, i_1\leq i\leq i_1+d-2\}$,
then the minimum distance   of $\mathcal{C}$ is exactly equal to  $d$.
In particular, $\mathcal{C}$ is an MDS cyclic code with parameters $[n,n-d+1,d]$.
Indeed, it follows from Lemma \ref{BCH} that the minimum distance of $\mathcal{C}$ is at least $d$. Since
the dimension of $\mathcal{C}$ is equal to $n-d+1$, then the minimum distance of $\mathcal{C}$ is no more than
$n-(n-d+1)+1=d$, which implies that $\mathcal{C}$ is an MDS cyclic code with parameters $[n,n-d+1,d]$ (e.g., see \cite{Chen}).
We will construct    MDS cyclic codes based on this fact.

\section{A Necessary and Sufficient Condition}

Throughout  this paper, $\mathcal{C}_0$ denotes the cyclic code of length $n$ over $\mathbb{F}_q$
with generator polynomial $1+x+\cdots+x^{n-1}$, i.e.,
$$
\mathcal{C}_0=\Big\{\alpha\big(1+x+\cdots+x^{n-1}\big)\,\Big{|}\,\alpha\in \mathbb{F}_q\Big\}.
$$
We always assume that $\mathcal{C}$ is a
cyclic code of length $n$ over $\mathbb{F}_q$ with parity-check polynomial $h(x)$ such that $h(1)=0$.
Note that $\mathcal{C}_0$ is contained in $\mathcal{C}$ if and only if $h(1)=0$.

Two codewords $c_1(x),c_2(x)$ of $\mathcal{C}$ are said to be {\em equivalent} if there exists an integer $t$
such that $x^tc_1(x)\equiv c_2(x)\pmod{x^n-1}$. The codewords of $\mathcal{C}$ then are classified into equivalence classes.
We say that a codeword $c(x)\in \mathcal{C}$ has size $n$ if  the equivalence class   containing  $c(x)$  has size $n$.

The following result establishes
a necessary and sufficient condition
under which  any codeword  in $\mathcal{C}\setminus\mathcal{C}_0$ has size $n$.

\begin{Theorem}\label{theorem1}
Let $\mathcal{C}$ be    a
cyclic code of length $n$ over $\mathbb{F}_q$  with parity-check polynomial $h(x)$ such that $h(1)=0$, say $h(x)=(x-1)h'(x)$.
Then the following statements are equivalent:
\begin{itemize}
\item[(i)]
Any  codeword  of
$
\mathcal{C}\setminus\mathcal{C}_0
$
has size $n$.
\item[(ii)] Any   root of $h'(x)$ is a primitive $n$-th root of unity.
\end{itemize}
\end{Theorem}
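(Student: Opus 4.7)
The plan is to translate everything into the frequency domain via a primitive $n$-th root of unity. Fix $\alpha$, a primitive $n$-th root of unity in some extension of $\mathbb{F}_q$, and identify $\mathbb{F}_q[x]/\langle x^n-1\rangle$ with its image under the evaluation map $c(x)\mapsto (c(\alpha^0),c(\alpha^1),\ldots,c(\alpha^{n-1}))$. For a codeword $c(x)$ write $S(c)=\{i:c(\alpha^i)\ne 0\}$. Then $c\in\mathcal{C}$ iff $S(c)$ is contained in the defining set of $h(x)$, and $c\in\mathcal{C}_0$ iff $S(c)\subseteq\{0\}$. In particular, for $c\in\mathcal{C}\setminus\mathcal{C}_0$, the set $S(c)\setminus\{0\}$ is nonempty and lies inside $Z_{h'}$, the set of indices of roots of $h'(x)$.

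The key computational step is to express the equivalence class size $|[c]|$ in terms of $S(c)$. Since cyclic shift corresponds to multiplication by $x$, the relation $x^tc(x)\equiv c(x)\pmod{x^n-1}$ holds iff $(\alpha^{ti}-1)c(\alpha^i)=0$ for every $i$, i.e.\ $n\mid ti$ for every $i\in S(c)$. A short number-theoretic computation, using the identity $\mathrm{lcm}(n/a,n/b)=n/\gcd(a,b)$ for divisors $a,b$ of $n$, shows that the smallest such positive $t$ equals
\[
|[c]|\;=\;\frac{n}{\gcd\bigl(\{n\}\cup(S(c)\setminus\{0\})\bigr)}.
\]
Condition (i) is therefore equivalent to the assertion that whenever $S(c)\subseteq Z_h$ and $S(c)\setminus\{0\}\ne\emptyset$, the gcd on the right equals $1$.

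Given this reformulation, (ii)$\Rightarrow$(i) is immediate: for any $c\in\mathcal{C}\setminus\mathcal{C}_0$, any $i_0\in S(c)\setminus\{0\}$ lies in $Z_{h'}$, so $\alpha^{i_0}$ is a primitive $n$-th root of unity, forcing $\gcd(i_0,n)=1$ and hence the joint gcd $=1$. For (i)$\Rightarrow$(ii) I would argue the contrapositive. Suppose some root $\alpha^{i_0}$ of $h'(x)$ has order strictly less than $n$, so $d:=\gcd(i_0,n)>1$. Because $\gcd(q,n)=1$, every element of the $q$-cyclotomic coset $C_{i_0}$ shares the gcd $d$ with $n$. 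I would then consider the cyclic subcode $\mathcal{C}'\subseteq\mathcal{C}$ whose parity-check polynomial is the minimal polynomial $M_{i_0}(x)$ of $\alpha^{i_0}$; this sits inside $\mathcal{C}$ because $M_{i_0}(x)\mid h(x)$, and it has dimension $|C_{i_0}|\ge 1$. Any nonzero $c\in\mathcal{C}'$ satisfies $S(c)\subseteq C_{i_0}$, so $S(c)$ misses $0$ (hence $c\notin\mathcal{C}_0$) while $d$ divides every element of $\{n\}\cup S(c)$; the displayed formula then gives $|[c]|\le n/d<n$, contradicting (i).

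The only nonroutine step is the closed form for the class size. The subtlety is that no single $i\in S(c)\setminus\{0\}$ need be coprime to $n$ in order for the class to have full size $n$ (take $n=6$ and $S(c)\setminus\{0\}=\{2,3\}$, for which $|[c]|=6$), so one must work with the joint gcd of $\{n\}\cup (S(c)\setminus\{0\})$ rather than with individual elements. Once this formula is established, both directions reduce to producing or excluding a single offending cyclotomic coset, and the remainder of the argument relies only on the standard correspondence between cyclic codes and defining sets recalled in Section~2.
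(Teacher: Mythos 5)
Your proof is correct, and it is organized around a different key lemma from the one the paper uses: the closed formula $|[c]|=n/\gcd\bigl(\{n\}\cup(S(c)\setminus\{0\})\bigr)$ for the orbit size of an arbitrary codeword in terms of its frequency support $S(c)=\{i: c(\alpha^i)\neq 0\}$. The paper never isolates such a formula; it argues both directions by polynomial congruences modulo $x^n-1$. For (i)$\Rightarrow$(ii) it exhibits the explicit witness $c(x)=m(x)g(x)$, where $m(x)$ is the product of all irreducible factors of $h(x)$ other than $x-1$ and the minimal polynomial of the offending root $\alpha^{i_0}$ (so its witness has frequency support contained in $\{0\}\cup C_{i_0}$, whereas yours lies inside $C_{i_0}$ alone), and verifies directly that $(x^{r}-1)m(x)g(x)\equiv 0\pmod{x^n-1}$ with $r<n$ the order of $\alpha^{i_0}$; for (ii)$\Rightarrow$(i) it deduces $(x^t-1)m_1(x)\equiv 0\pmod{h'(x)}$ from a hypothetical short orbit and concludes $h'(x)\mid m_1(x)$, which is exactly the step your gcd computation replaces. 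Both proofs rest on the same frequency-domain mechanism, but your formula buys more: it gives the exact orbit size of every codeword (correctly handling supports such as $\{2,3\}$ modulo $6$, where no single index is coprime to $n$), it makes both implications one-line consequences, and it yields Proposition 3.4 as the special case where all roots of $h(x)$ are primitive, a result the paper only asserts ``can be proven in a fashion similar to Theorem 3.1.'' Two cosmetic repairs: the set you call ``the defining set of $h(x)$'' should be named $Z_h$, the set of exponents of the roots of $h(x)$ (the defining set of $\mathcal{C}$ is the complementary set attached to $g(x)$); and you should record that $x^n-1$ is squarefree because $\gcd(n,q)=1$, since this is what justifies testing the congruence $x^tc(x)\equiv c(x)\pmod{x^n-1}$ pointwise at the $n$ roots of unity and guarantees that a root of $h'(x)$ is never equal to $1$, so that $1<\gcd(i_0,n)<n$ in your contrapositive.
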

\begin{proof}
Assume first that $(i)$ holds.
If $n$ is a prime number, then all the   roots of $(x^n-1)/(x-1)$ are primitive $n$-th roots of unity. In particular,
the   roots of $h'(x)$ are primitive $n$-th roots of unity, and we are done. Therefore, we can assume that
$n$ is a composite  number.
To get $(ii)$, it is enough to prove that the roots of $(x^n-1)/(x-1)$  which have  multiplicative order less than $n$ are roots of $g(x)=(x^n-1)/h(x)$.

Let $\alpha$ be a primitive $n$-th root of unity. Assume to the contrary that there exists $0<i_0<n$ with $\gcd(i_0,n)\neq1$ such that
$\alpha^{i_0}$ is not a root of $g(x)$,
i.e.,  $\alpha^{i_0}$ is a root of $(x^n-1)/(x-1)$ having order (say $r$)  less than $n$ and $g(\alpha^{i_0})\neq0$.
 Let $C_{i_0}$ denote the $q$-cyclotomic coset modulo $n$ containing $i_0$, and
assume that  $C_0=\{0\}, C_{i_0}, C_{i_1}, \cdots, C_{i_t}, C_{i_{t+1}}, \cdots, C_{i_{t+s}}$ are all the distinct $q$-cyclotomic
cosets modulo $n$.  Without loss of generality, suppose that  the defining set of $\mathcal{C}$ is
 $C_{i_1}\bigcup C_{i_2}\bigcup\cdots\bigcup C_{i_t}$, i.e.,
$$
g(x)=\prod\limits_{j=1}^t M_{j}(x),~~~ \hbox{where~$M_{j}(x)=\prod\limits_{k\in C_{i_{j}}}(x-\alpha^k)$~for~$1\leq j\leq t$}.
$$
Now let
$$
m(x)=\prod\limits_{j=1}^s M_{t+j}(x),~~~ \hbox{where~$M_{t+j}(x)=\prod\limits_{k\in C_{i_{t+j}}}(x-\alpha^k)$~for~$1\leq j\leq s$}.
$$
It follows that $c(x)=m(x)g(x)$ lies in  $\mathcal{C}\setminus \mathcal{C}_0$.
However, from our construction,
$$
(x^r-1)m(x)g(x)\equiv0\pmod{x^n-1}.
$$
This says that $c(x)$ is a  codeword of $\mathcal{C}\setminus \mathcal{C}_0$, and the size of the equivalence class containing  $c(x)$
is less than $n$. This is a contradiction.

Assume that $(ii)$ holds.
Suppose otherwise that there exists a   codeword $c_1(x)\in \mathcal{C}\setminus \mathcal{C}_0$
such that the equivalence class containing it has size $t<n$.
Thus
$$
x^tc_1(x)\equiv c_1(x)\pmod{x^n-1}.
$$
Let $c_1(x)=m_1(x)g(x)$, then we  have
\begin{equation*}
(x^t-1)m_1(x)g(x)\equiv 0\pmod{x^n-1},
\end{equation*}
and hence
\begin{equation}\label{equ1}
(x^t-1)m_1(x)\equiv 0\pmod{h'(x)}.
\end{equation}
Recall that all the   roots of $h'(x)$ are primitive $n$-th roots of unity.
Since the roots of $x^t-1$ contain no primitive $n$-th root of unity,   $h'(x)$ is a divisor of $m_1(x)$.
We have arrived at a contradiction since this implies $c_1(x)=m_1(x)g(x)\in \mathcal{C}_0$.
\end{proof}

\begin{Example}\rm
Consider cyclic codes of length $9$ over $\mathbb{F}_8$. It is easy to verify that all the distinct $8$-cyclotomic cosets
modulo $9$ are given by
$C_0=\{0\}, C_1=\{1,8\}, C_2=\{2,7\}, C_3=\{3,6\}$ and $C_4=\{4,5\}$.
Take $\alpha$ to be a primitive ninth root of unity in $\mathbb{F}_{64}$. Then
$$
x^9-1=(x-1)M_1(x)M_2(x)M_3(x)M_4(x), ~~\hbox{with}~~M_i(x)=\prod\limits_{j\in C_i}(x-\alpha^j),~1\leq i\leq4
$$
gives the irreducible factorization of $x^9-1$ over $\mathbb{F}_8$. It is readily seen that the roots of $M_3(x)$ are primitive
third roots of unity. Let $\mathcal{C}_1$ and $\mathcal{C}_2$ be   cyclic codes of length $9$ over $\mathbb{F}_8$ with parity-check polynomials $(x-1)M_1(x)M_2(x)M_4(x)$
and $(x-1)M_1(x)M_2(x)$, respectively.
It follows from Theorem \ref{theorem1} that any codeword of  $\mathcal{C}_i\setminus \mathcal{C}_0$
has size $9$, for $i=1,2$. Note that $\mathcal{C}_2$ is an MDS cyclic code with parameters $[9,5,5]$.
\end{Example}

As we will show in the next section,
Theorem \ref{theorem1} is useful in constructing optimal   FHS sets.
In the rest of this section,
Theorem \ref{theorem1} is used to  answer an open question posed by Ding {\it et al.} \cite[p.3302]{Ding09}.

Let $m$ be a positive integer such that $\gcd(m,q-1)=1$.
Take $n=(q^m-1)/(q-1)$. Let  $\beta$ be a generator of $\mathbb{F}_{q^m}^*=\mathbb{F}_{q^m}\setminus\{0\}$ and $\gamma=\beta^{q-1}$.
Define a cyclic   code $\mathcal{C}_{(q,m)}$ by
$$
\mathcal{C}_{(q,m)}=\Big\{c(x)\,\Big{|}\,c(x)\in \mathbb{F}_q[x]_n~\hbox{and}~c(\gamma)=0\Big\}
$$
where $\mathbb{F}_q[x]_n$ consists of all polynomials of degree at most $n-1$ over $\mathbb{F}_q$.
It is  known that $\mathcal{C}_{(q,m)}$ is an $[n,n-m,3]$ code with defining set
\begin{equation}\label{defining}
C_1=\Big\{q^i \pmod{n} \,\Big{|}\,0\leq i\leq m-1\Big\}.
\end{equation}
Assuming that $n=(q^m-1)/(q-1)$ is a prime number, Ding {\it et al.} in \cite[Theorem 12]{Ding09} used $\mathcal{C}_{(q,m)}$ to construct an  FHS set whose paramters meet the sphere-packing bound   (\ref{sbound}).
The following open question was posed in \cite[p.3302]{Ding09}: ``If the condition that $n=(q^m-1)/(q-1)$ is a prime number is dropped, is it
still true that any    codeword of $\mathcal{C}_{(q,m)}\setminus\mathcal{C}_0$ has size $n$?"

By Theorem \ref{theorem1}, we   have a complete answer  to this question.

\begin{Corollary}\label{open}
Assume the same notation as previously defined. Then
 any codeword of $\mathcal{C}_{(q,m)}\setminus\mathcal{C}_0$  has size $n$ if and only if $n$ is a prime number.
\end{Corollary}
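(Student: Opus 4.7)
The plan is to deduce the corollary directly from Theorem \ref{theorem1} by analyzing the defining set $C_1$ given in \eqref{defining} modulo arithmetic considerations.

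First, I would identify the roots of $h'(x)$ for $\mathcal{C}_{(q,m)}$. Since the defining set is $C_1=\{q^i\bmod n : 0\le i\le m-1\}$ and $0\notin C_1$ (because $\gcd(q,n)=1$, as $n\mid q^m-1$), the generator polynomial is $g(x)=\prod_{j\in C_1}(x-\alpha^j)$ and the parity-check polynomial $h(x)=(x^n-1)/g(x)$ factors as $(x-1)h'(x)$ where
\[
h'(x)=\prod_{\substack{0<j<n\\ j\notin C_1}}(x-\alpha^j).
\]
By Theorem \ref{theorem1}, the corollary reduces to showing that every such index $j$ satisfies $\gcd(j,n)=1$ if and only if $n$ is prime.

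The crucial observation is that every element of $C_1$ is coprime to $n$: indeed, since $n\mid q^m-1$ we have $\gcd(q,n)=1$, hence $\gcd(q^i\bmod n,\,n)=1$ for every $i$. With this in hand, both directions fall out easily. If $n$ is prime, then trivially every $j$ with $0<j<n$ is coprime to $n$, so all roots of $h'(x)$ are primitive $n$-th roots of unity, and condition (ii) of Theorem \ref{theorem1} holds. Conversely, if $n$ is composite, let $p$ be any prime divisor of $n$ with $p<n$; then $j=p$ satisfies $0<j<n$ and $\gcd(j,n)=p>1$, so by the coprimality observation above $j\notin C_1$, meaning $\alpha^j$ is a root of $h'(x)$ but has multiplicative order $n/p<n$. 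Hence condition (ii) fails, and by Theorem \ref{theorem1} some codeword of $\mathcal{C}_{(q,m)}\setminus\mathcal{C}_0$ has size strictly less than $n$.

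There is essentially no main obstacle here, since Theorem \ref{theorem1} does all the structural work; the only thing to verify is the elementary fact that $C_1$ consists entirely of units modulo $n$, which is immediate from $\gcd(q,n)=1$. The proof therefore amounts to combining Theorem \ref{theorem1} with this one-line observation.
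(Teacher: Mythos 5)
Your proof is correct and follows essentially the same route as the paper: both reduce the statement to condition (ii) of Theorem \ref{theorem1} and use the fact that the elements of $C_1$ are units modulo $n$ (since $\gcd(q,n)=1$), so that $\alpha^d$ for a proper divisor $d$ of $n$ must be a root of $h'(x)$ of order less than $n$. If anything, your write-up is slightly more complete, since you spell out the forward direction (prime $n$ implies condition (ii)) and explicitly justify why $d\notin C_1$, both of which the paper leaves implicit.
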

\begin{proof}
It is known that $\mathcal{C}_{(q,m)}$ is a cyclic code of length $n$ over $\mathbb{F}_q$ with defining set given by (\ref{defining}).
Thus the generator polynomial of   $\mathcal{C}_{(q,m)}$ is
$$
g(x)=\prod\limits_{i\in C_1}(x-\gamma^i).
$$
Let $h(x)=\frac{x^n-1}{g(x)}=(x-1)h'(x)$.
Suppose that    any codeword  of $\mathcal{C}_{(q,m)}\setminus\mathcal{C}_0$ has size $n$.
It follows from Theorem \ref{theorem1} that all the roots of $h'(x)$ are primitive $n$-th roots of unity.
Suppose otherwise that $1<d<n$ is a divisor of $n$.
Then $\gamma^d$ must be a root of $h'(x)$,   contradicting Theorem \ref{theorem1}.
\end{proof}

The following result can be proven in a fashion similar to Theorem \ref{theorem1}.
\begin{Proposition}\label{pro}
Let $\mathcal{D}$ be a nonzero cyclic code of length $n$ over $\mathbb{F}_q$ with parity-check polynomial $h(x)$,
where $n$ is a positive integer co-prime to $q$.
Then any  nonzero codeword of
$
\mathcal{D}
$
has size $n$ if and only if the roots of $h(x)$ are   primitive $n$-th roots of unity.
\end{Proposition}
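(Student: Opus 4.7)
The plan is to adapt the two-step argument of Theorem \ref{theorem1} essentially verbatim. The only structural difference is that the standing assumption $\mathcal{C}_0\subseteq\mathcal{D}$ has been dropped, so we work with the full parity-check polynomial $h(x)$ rather than the reduced polynomial $h'(x)$, and the ``trivial'' codewords we must exclude are the zero codeword rather than the constant codewords in $\mathcal{C}_0$. Throughout, fix a primitive $n$-th root of unity $\alpha$ in a suitable extension of $\mathbb{F}_q$ and set $g(x)=(x^n-1)/h(x)$.

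For the direction $(\Leftarrow)$, assume that every root of $h(x)$ is a primitive $n$-th root of unity, and suppose for contradiction that $c_1(x)=m_1(x)g(x)\in\mathcal{D}$ is a nonzero codeword whose equivalence class has size $t<n$. Then $x^tc_1(x)\equiv c_1(x)\pmod{x^n-1}$ yields $(x^t-1)m_1(x)\equiv 0\pmod{h(x)}$ in the same way as in the proof of Theorem \ref{theorem1}. Since $t<n$, the roots of $x^t-1$ have order dividing $t$ and hence cannot be primitive $n$-th roots of unity, so $\gcd\bigl(x^t-1,h(x)\bigr)=1$ and thus $h(x)\mid m_1(x)$. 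This forces $c_1(x)$ to be a multiple of $h(x)g(x)=x^n-1$; combined with $\deg c_1(x)<n$ this yields $c_1(x)=0$, a contradiction.

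For the direction $(\Rightarrow)$, I would argue by contrapositive. Suppose some root $\alpha^{i_0}$ of $h(x)$ has multiplicative order $r<n$, and let $M_{i_0}(x)$ be the irreducible factor of $x^n-1$ over $\mathbb{F}_q$ whose roots are indexed by the $q$-cyclotomic coset $C_{i_0}$. Define
\[
c(x)=\frac{x^n-1}{M_{i_0}(x)}.
\]
Since $M_{i_0}(x)\mid h(x)$, we have $g(x)\mid c(x)$, so $c(x)\in\mathcal{D}$; and $\deg c(x)=n-|C_{i_0}|<n$ shows that $c(x)$ is a nonzero codeword. Because $\gcd(q,n)=1$, every element of $C_{i_0}$ has the same multiplicative order $r$ with respect to $\alpha$, so every root of $M_{i_0}(x)$ is an $r$-th root of unity and $M_{i_0}(x)\mid x^r-1$. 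Consequently $(x^r-1)c(x)$ is a multiple of $x^n-1$, giving $x^rc(x)\equiv c(x)\pmod{x^n-1}$, so the equivalence class of $c(x)$ has size at most $r<n$, contradicting the hypothesis.

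There is no genuine obstacle in adapting the proof; the only point requiring brief verification is that the cyclotomic-coset construction in the contrapositive simultaneously produces a codeword that is nonzero, lies in $\mathcal{D}$, and is annihilated by $x^r-1$ modulo $x^n-1$. All three properties follow at once from the closed form $c(x)=(x^n-1)/M_{i_0}(x)$, so the argument is essentially a direct transcription of the corresponding parts of Theorem \ref{theorem1}.
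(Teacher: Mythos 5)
Your proof is correct and is precisely the adaptation of the argument of Theorem \ref{theorem1} that the paper has in mind (the paper itself only remarks that Proposition \ref{pro} ``can be proven in a fashion similar to Theorem \ref{theorem1}'' and cites an earlier appearance of the result). Both directions are handled the same way: the divisibility argument $h(x)\mid m_1(x)$ forcing $c_1(x)=0$, and the explicit codeword $(x^n-1)/M_{i_0}(x)$ annihilated by $x^r-1$, match the paper's construction with the factor $(x-1)$ correctly omitted since the trivial subcode to exclude is now $\{0\}$ rather than $\mathcal{C}_0$.
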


Note that the result of Proposition \ref{pro} appeared previously in \cite[Theorem 1]{Le}.
We will apply Theorem \ref{theorem1} and Proposition \ref{pro} to construct optimal  FHS sets.

\section{Optimal   FHS Sets from Cyclic Codes}
In  this section, three families of optimal   FHS sets with respect to the Singleton  bound  (\ref{sinbound})  are obtained.
It turns out that some of the   FHS sets are also optimal with respect to the Peng-Fan bound     (\ref{PF}) at the same time.
In  the  light of \cite{Ding09} and \cite{Ding10}, we first present two  basic facts.
\begin{itemize}
\item[Fact 1.]

If one has found an $[n,k,n-k+1]$ MDS cyclic code $\mathcal{C}$ over $\mathbb{F}_q$ with $n>q$
satisfying Theorem \ref{theorem1}, then   an  FHS set is obtained by taking exactly one element from  every equivalence class of $\mathcal{C}\setminus\mathcal{C}_0$,
which results  in an $(n,\frac{q^{k}-q}{n}, k-1;q)$ FHS set. It is straightforward to verify that this   FHS set is optimal with respect to the
 Singleton bound (\ref{sinbound}).

\item[Fact 2.]

Similarly, if one has found an $[n,k,n-k+1]$ MDS cyclic code $\mathcal{D}$ over $\mathbb{F}_q$
satisfying Proposition \ref{pro}, then an FHS set is obtained by taking exactly one element  from every equivalence class of $\mathcal{D}\setminus\{0\}$,
which results in an $(n,\frac{q^{k}-1}{n}, k-1;q)$ FHS set. This FHS set is  optimal with respect to the
 Singleton bound (\ref{sinbound}).
\end{itemize}

We  explain the reason for which an $[n,k,n-k+1]$ MDS cyclic code over $\mathbb{F}_q$ with $n>q$
satisfying Theorem \ref{theorem1} gives rise to   an  $(n,N,\lambda;\ell)=(n,\frac{q^{k}-q}{n}, k-1;q)$ FHS set, i.e., Fact 1 holds.
It follows from the  definition  of $\lambda$ and the Singleton bound
(\ref{sinbound}) that  $\lambda\leq n-(n-k+1)=k-1$
and   $N=\frac{q^{k}-q}{n}\leq\lfloor\frac{q^{\lambda+1}}{n}\rfloor$, which forces $\lambda=k-1$.  Fact 2 is also satisfied
    for a similar reason.

In the following, three families of MDS cyclic codes satisfying  Fact $1$ or Fact $2$ are constructed
and their parameters  are computed. Consequently,
optimal   FHS sets with respect to the Singleton  bound    (\ref{sinbound})  are derived from these MDS cyclic codes.
These   FHS sets are new in the sense that their parameters have not been covered
in the literature.

\subsection{Optimal FHS sets of length $q+1$}

Consider cyclic codes of length $q+1$ over $\mathbb{F}_q$.
We first study the case  $q=2^m$, where $m>1$ is a  positive integer.
Let $s=2^{m-1}$. By \cite[p.324]{Mabook}, we know that   all the distinct $q$-cyclotomic cosets modulo $q+1$ are given by
$$
C_0=\big\{0\big\}~~\hbox{and}~~C_i=\big\{i, q+1-i\big\},~~~\hbox{for~$1\leq i\leq s$}.
$$
Suppose $p$ is the smallest prime divisor of $q+1$.
Let $\alpha\in \mathbb{F}_{q^2}$ be a primitive $(q+1)$-st root of unity.
For any integer $k$ with $1\leq k\leq\min\{p-1,s\}$,
let $\mathcal{C}$ be a cyclic code of length $q+1$ over $\mathbb{F}_q$ with
parity-check polynomial
$$
h(x)=(x-1)M_1(x)M_2(x)\cdots M_{k}(x), ~~~M_j(x)=\prod\limits_{t\in C_j}(x-\alpha^t), ~~1\leq j\leq k.
$$
Equivalently, $\mathcal{C}$ is the cyclic code of length $q+1$ over $\mathbb{F}_q$ with defining set
$$
C_{k+1}\bigcup C_{k+2}\bigcup\cdots\bigcup C_{s}.
$$
Thus $\mathcal{C}$ is a $[q+1, 2k+1,q-2k+1]$ MDS cyclic code satisfying Theorem \ref{theorem1}.
The above discussion leads to the following result.
\begin{Theorem}\label{construction1}
Let $q=2^m$, where $m>1$ is a positive integer. Suppose $p$ is the smallest prime divisor of $q+1$.
For any integer $k$ with $1\leq k\leq\min\{p-1,2^{m-1}\}$, we have a  $(q+1,\frac{q^{2k+1}-q}{q+1},2k;q)$  FHS set whose parameters
meet the Singleton bound   (\ref{sinbound}).
\end{Theorem}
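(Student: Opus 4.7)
The plan is to verify that the cyclic code $\mathcal{C}$ constructed in the paragraph preceding the theorem is an $[q+1,2k+1,q-2k+1]$ MDS cyclic code satisfying condition (ii) of Theorem \ref{theorem1}, and then to invoke Fact $1$ directly.

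First I would compute the defining set of $\mathcal{C}$ explicitly. Since $s = 2^{m-1}$, we have $q+1-s = s+1$, and the cosets $C_j = \{j,\,q+1-j\}$ for $1\le j\le s-1$ together with $C_s = \{s\}$ partition $\{1,2,\ldots,q\}$ into symmetric pairs around $s$. Consequently the defining set $C_{k+1}\cup C_{k+2}\cup\cdots\cup C_s$ is exactly the block of $q-2k$ consecutive integers $\{k+1,k+2,\ldots,q-k\}$. By the BCH bound (Lemma \ref{BCH}), the minimum distance of $\mathcal{C}$ is at least $q-2k+1$. Since the dimension of $\mathcal{C}$ equals $(q+1)-(q-2k)=2k+1$, the Singleton bound (Proposition \ref{Singleton}) forces equality, and $\mathcal{C}$ is an MDS $[q+1,2k+1,q-2k+1]$ cyclic code.

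Next I would verify condition (ii) of Theorem \ref{theorem1}. The polynomial $h'(x)=M_1(x)M_2(x)\cdots M_k(x)$ has roots $\alpha^i$ with $i\in C_1\cup\cdots\cup C_k = \{1,\ldots,k\}\cup\{q+1-k,\ldots,q\}$. Recall that $\alpha^i$ is a primitive $(q+1)$-st root of unity if and only if $\gcd(i,q+1)=1$, and $\gcd(q+1-i,q+1)=\gcd(i,q+1)$, so it suffices to consider $1\le i\le k$. Since $k\le p-1$, where $p$ is the smallest prime divisor of $q+1$, every integer in the range $\{1,2,\ldots,k\}$ has no prime factor in common with $q+1$; hence $\gcd(i,q+1)=1$ for each relevant $i$, and every root of $h'(x)$ is a primitive $(q+1)$-st root of unity. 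This is precisely condition (ii) of Theorem \ref{theorem1}, so every codeword of $\mathcal{C}\setminus\mathcal{C}_0$ has equivalence class of size $n=q+1$.

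Finally, since $n=q+1>q$ and $\mathcal{C}$ is an $[q+1,2k+1,q-2k+1]$ MDS cyclic code meeting the hypothesis of Fact $1$, choosing one representative from each equivalence class of $\mathcal{C}\setminus\mathcal{C}_0$ yields exactly $(q^{2k+1}-q)/(q+1)$ codewords, producing a $(q+1,(q^{2k+1}-q)/(q+1),2k;q)$ FHS set optimal with respect to the Singleton bound (\ref{sinbound}). The main obstacle in this plan is the verification that the defining set forms a consecutive block together with the primality constraint $k\le p-1$; the former is a small but crucial combinatorial observation about the symmetry of the $q$-cyclotomic cosets modulo $q+1$, while the latter is exactly what forces the non-primitive $(q+1)$-st roots of unity to be excluded from $h'(x)$.
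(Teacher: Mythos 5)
Your proposal follows essentially the same route as the paper: compute the cyclotomic cosets, observe that the defining set $C_{k+1}\cup\cdots\cup C_s$ is a consecutive block so that the BCH and Singleton bounds force $\mathcal{C}$ to be a $[q+1,2k+1,q-2k+1]$ MDS code, use $k\leq p-1$ to verify condition (ii) of Theorem \ref{theorem1}, and invoke Fact 1. One small but substantive slip: you assert $C_s=\{s\}$, whereas in fact $C_s=\{s,q+1-s\}=\{s,s+1\}$ is a two-element coset (since $q+1$ is odd, no nonzero residue modulo $q+1$ equals its own negative); taken literally, your version would omit $s+1$ from the defining set, destroying the consecutiveness you need, so it is precisely the correct identity $q+1-s=s+1$ placing both $s$ and $s+1$ in $C_s$ that makes the block $\{k+1,\dots,q-k\}$ gapless. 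The rest of the argument, including the primitivity check via $\gcd(q+1-i,q+1)=\gcd(i,q+1)$ and the appeal to Fact 1 with $n=q+1>q$, matches the paper and is correct.
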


As an additional remark, we observe  that if $m$ is odd in Theorem \ref{construction1}, then $p=3$. This is simply because $3$ is always a divisor
of $2^m+1$ for any odd $m$.

\begin{Example}\rm
Take $q=2^3=8$ and $n=q+1=9$. Then $p=3$ and $1\leq k\leq 2$.
Thus we have a $(9,(8^{2k+1}-8)/9,2k;8)$ FHS set
for $k=1,2$. It is easy to verify that these parameters indeed meet the Singleton bound   (\ref{sinbound}).
\end{Example}

\begin{Example}\rm
Take $q=2^4=16$ and $n=q+1=17$. We have $p=17$ and $\min\{p-1,2^3\}=8$, thus $1\leq k\leq 8$.
Theorem \ref{construction1} applies to give an optimal $(17,\frac{16^{2k+1}-16}{17},2k;16)$  FHS set
for any $1\leq k\leq 8$ with respect to the Singleton bound
(\ref{sinbound}).
\end{Example}

We next consider cyclic codes of length $q+1$ over $\mathbb{F}_q$ in the case where $q$ is an odd prime power.
It is easy to verify that   all the distinct $q$-cyclotomic cosets modulo $q+1$ are given by
$$
C_0=\big\{0\big\},~~C_{\frac{q+1}{2}}=\big\{\frac{q+1}{2}\big\}~~\hbox{and}~~C_i=\big\{i, q+1-i\big\},~~~\hbox{for~$1\leq i\leq \frac{q+1}{2}-1$}.
$$
Let $\alpha\in \mathbb{F}_{q^2}$ be a primitive $(q+1)$-th root of unity.
In this case, let $\mathcal{C}$ be a cyclic code of length $q+1$ over $\mathbb{F}_q$ with
parity-check polynomial $h(x)=(x-1)M_1(x)$, where $M_1(x)=(x-\alpha)(x-\alpha^{-1})$.
Thus $\mathcal{C}$ is a $[q+1, 3,q-1]$ MDS cyclic code satisfying Theorem \ref{theorem1}.
From Fact 1,
there is an optimal $(q+1, q(q-1),2;q)$  FHS set
with respect to  the Singleton bound (\ref{sinbound}), where $q$ is an odd prime power.
We claim that these parameters also meet the Peng-Fan bound      (\ref{PF}).  To see this, note that
$M(\mathcal{F})=2$ and $I=\lfloor nN/\ell\rfloor=q^2-1$.
It is clear that
$$
\left\lceil\frac{2InN-(I+1)I\ell}{(nN-1)N}\right\rceil=\left\lceil\frac{2q(q+1)^2(q-1)^2-q^3(q+1)(q-1)}{q(q-1)\big(q(q+1)(q-1)-1\big)}\right\rceil.
$$
After simple computations we have
$$
\frac{2q(q+1)^2(q-1)^2-q^3(q+1)(q-1)}{q(q-1)\big(q(q+1)(q-1)-1\big)}>1,
$$
proving the claim.

Summarizing the previous discussion, we arrive at the following result.

\begin{Theorem}\label{construction1'}
Let $q$ be an odd prime power.
Then there is a $(q+1, q(q-1),2;q)$ FHS set whose parameters
meet both the Peng-Fan bound   (\ref{PF}) and the Singleton bound   (\ref{sinbound}).
\end{Theorem}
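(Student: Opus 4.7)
The plan is to exhibit an explicit MDS cyclic code over $\mathbb{F}_q$ of length $q+1$ and dimension $3$, verify that it satisfies the hypothesis of Theorem \ref{theorem1}, and then invoke Fact $1$ together with a short numerical check of the Peng-Fan bound.

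First, fix a primitive $(q+1)$-th root of unity $\alpha\in \mathbb{F}_{q^2}$, and let $\mathcal{C}$ be the cyclic code of length $q+1$ over $\mathbb{F}_q$ whose parity-check polynomial is $h(x)=(x-1)M_1(x)$ with $M_1(x)=(x-\alpha)(x-\alpha^{-1})$. Equivalently, the generator polynomial of $\mathcal{C}$ has roots $\{\alpha^i\,|\,2\leq i\leq q-1\}$, a run of $q-2$ consecutive powers of $\alpha$. By Lemma \ref{BCH} together with the remark following Proposition \ref{Singleton}, $\mathcal{C}$ is a $[q+1,3,q-1]$ MDS cyclic code. Since the two roots of $h'(x)=M_1(x)$ are primitive $(q+1)$-th roots of unity, Theorem \ref{theorem1} applies, so every codeword of $\mathcal{C}\setminus \mathcal{C}_0$ has equivalence class of size $n=q+1$. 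Invoking Fact $1$ with $k=3$, $n=q+1$, $\ell=q$ produces an $(n,N,\lambda;\ell)=\bigl(q+1,(q^3-q)/(q+1),2;q\bigr)=(q+1,q(q-1),2;q)$ FHS set, which by construction attains equality in the Singleton bound (\ref{sinbound}).

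It remains to check that the Peng-Fan bound (\ref{PF}) is also met with equality. Here $M(\mathcal{F})=2$ and $I=\lfloor nN/\ell\rfloor=q^2-1$, so substituting into the right-hand side of (\ref{PF}) gives
\[
\left\lceil\frac{2InN-(I+1)I\ell}{(nN-1)N}\right\rceil=\left\lceil\frac{2q(q-1)^2(q+1)^2-q^3(q-1)(q+1)}{q(q-1)\bigl(q(q+1)(q-1)-1\bigr)}\right\rceil.
\]
The only substantive arithmetic in the proof is to confirm that this rational expression lies strictly between $1$ and $2$, so that its ceiling equals $2=M(\mathcal{F})$. I expect this ceiling estimate to be the only delicate step; a direct cross-multiplication shows the numerator exceeds the denominator yet is bounded above by twice it, finishing the proof. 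Everything else is a mechanical application of the MDS construction together with Theorem \ref{theorem1} and Fact $1$.
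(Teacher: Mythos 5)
Your proof is correct and takes essentially the same route as the paper: the identical $[q+1,3,q-1]$ MDS cyclic code with parity-check polynomial $(x-1)(x-\alpha)(x-\alpha^{-1})$, verified via Theorem \ref{theorem1} and fed into Fact 1. The only cosmetic difference is the final numerical step: the paper checks only that the Peng-Fan expression exceeds $1$ (the upper estimate $\leq 2$ being automatic since the bound is a lower bound on $M(\mathcal{F})=2$), while you propose verifying $1<\frac{(q+1)(q^2-2)}{q^3-q-1}<2$ directly; both computations go through.
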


\begin{Example}\rm
Take $q=5^2=25$ and $n=25+1=26$.
Theorem \ref{construction1'} applies to give a $(26,600,2;25)$ FHS set. It is easy to check that both
the Peng-Fan bound   (\ref{PF}) and the Singleton bound    (\ref{sinbound})  are met.
\end{Example}

\subsection{Optimal FHS sets of length dividing $q+1$}
Let $n>1$    be an odd   divisor of $q+1$.
Let $\alpha\in \mathbb{F}_{q^2}$ be a primitive $n$-th root of unity.
It is easy to verify that   all the distinct $q$-cyclotomic cosets modulo $n$ are given by
$$
C_0=\big\{0\big\}~~\hbox{and}~~C_i=\{i, n-i\},~~~\hbox{for~$1\leq i\leq \frac{n-1}{2}$}.
$$
Let   $M\leq(n-1)/2$   be the largest positive  integer satisfying $\gcd(M,n)\neq1$; if no such integer    exists,
then $M$ is assumed to be $0$. Observe that $\gcd(\frac{n-1}{2},n)=1$, thus $M\leq(n-3)/2$. Fix a value $k$, $0\leq k\leq(n-3)/2-M$.
Let $\mathcal{D}$ be a cyclic code of length $n$ over $\mathbb{F}_q$ with
parity-check polynomial
$$
h(x)=M_{(n-1)/2}(x)M_{(n-1)/2-1}(x)\cdots M_{(n-1)/2-k}(x)
$$
where $M_j(x)=\prod_{t\in C_j}(x-\alpha^t), ~~(n-1)/2-k\leq j\leq (n-1)/2$.
It follows that  $\mathcal{D}$ is an $[n, 2(k+1),n-2k-1]$ MDS cyclic code satisfying Proposition \ref{pro}.

The discussion above establishes the following theorem.

\begin{Theorem}\label{theoremlasttwo}
Let    $n>1$ be an odd divisor of $q+1$.
Let $M\le (n-3)/2$ be the largest integer such that $\gcd(M,n)>1$.
For any integer  $k$  with $0\leq k\leq(n-3)/2-M$, we have an optimal $\big(n,(q^{2k+2}-1)/n,2k+1;q\big)$  FHS set
with respect to the Singleton bound   (\ref{sinbound}).

In particular, by taking $k=0$,
we have an $\big(n,(q^{2}-1)/n,1;q\big)$ FHS set, whose parameters
meet both the Peng-Fan bound   (\ref{PF}) and the Singleton bound    (\ref{sinbound}).
\end{Theorem}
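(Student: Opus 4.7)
The plan is to realize the target FHS set as the set of cyclic-shift orbit representatives of a suitable MDS cyclic code $\mathcal{D}$ over $\mathbb{F}_q$, so that Fact~2 applies. Since $n \mid q+1$, we have $q \equiv -1 \pmod{n}$, whence the $q$-cyclotomic cosets modulo $n$ are exactly $C_0 = \{0\}$ and $C_i = \{i, n-i\}$ for $1 \le i \le (n-1)/2$. I would take $\mathcal{D}$ to be the cyclic code whose parity-check polynomial is $h(x) = M_{(n-1)/2}(x) M_{(n-1)/2-1}(x) \cdots M_{(n-1)/2-k}(x)$, so that the defining set of $\mathcal{D}$ is $C_0 \cup C_1 \cup \cdots \cup C_{(n-3)/2-k}$.

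I would then verify that $\mathcal{D}$ is an $[n, 2(k+1), n-2k-1]$ MDS code. Viewed inside the representatives $\{-(n-1)/2, \dots, (n-1)/2\}$, the defining set is the block of $n - 2k - 2$ consecutive integers from $-((n-3)/2 - k)$ to $(n-3)/2 - k$, so Lemma~\ref{BCH} gives $d(\mathcal{D}) \ge n - 2k - 1$; and since each $M_j$ with $1 \le j \le (n-1)/2$ has degree $2$, $\dim \mathcal{D} = 2(k+1)$, and Proposition~\ref{Singleton} forces equality. To invoke Proposition~\ref{pro} I must check that every root of $h(x)$ is a primitive $n$-th root of unity, i.e., $\gcd(j, n) = 1$ for every $j \in C_{(n-1)/2-k} \cup \cdots \cup C_{(n-1)/2}$. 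By the symmetry $\gcd(j, n) = \gcd(n-j, n)$ it is enough to check $(n-1)/2 - k \le j \le (n-1)/2$. The hypothesis $k \le (n-3)/2 - M$ gives $(n-1)/2 - k \ge M + 1$, and by the defining property of $M$ every integer in $[M+1, (n-3)/2]$ is coprime to $n$; moreover $\gcd((n-1)/2, n) = 1$ because $n$ is odd (any common divisor would divide $n - 2 \cdot (n-1)/2 = 1$). Thus Proposition~\ref{pro} applies, and Fact~2 yields an FHS set with parameters $(n, (q^{2k+2}-1)/n, 2k+1; q)$ meeting the Singleton bound.

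For the case $k = 0$ I would verify the Peng-Fan bound~(\ref{PF}) by direct substitution: with $N = (q^2-1)/n$ and $\ell = q$, we get $I = \lfloor (q^2-1)/q \rfloor = q-1$, and the right-hand side of~(\ref{PF}) simplifies after cancellation to $\lceil n/(q+1) \rceil$, which equals $1$ since $1 < n \le q+1$. This matches $\lambda = 1 = M(\mathcal{F})$. The main obstacle I anticipate is the coprimality step: one must select the parity-check cosets simultaneously adjacent to the top index $(n-1)/2$ (so that the complementary generator roots form a BCH-type consecutive block, forcing the MDS property) and disjoint from every index $j$ with $\gcd(j, n) > 1$ (so that Proposition~\ref{pro} applies and the orbits have full length $n$). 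The range $0 \le k \le (n-3)/2 - M$ is calibrated precisely to make both conditions hold; once this is unwound, the dimension count, the BCH/Singleton matching, and the Peng-Fan verification are routine.
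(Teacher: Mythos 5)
Your proposal is correct and follows essentially the same route as the paper: the same parity-check polynomial $h(x)=M_{(n-1)/2}(x)\cdots M_{(n-1)/2-k}(x)$, the same BCH/Singleton argument for the MDS parameters $[n,2(k+1),n-2k-1]$, the appeal to Proposition \ref{pro} via the coprimality of the indices in $[M+1,(n-1)/2]$, and Fact~2 to produce the FHS set. You simply supply more detail than the paper (which compresses these verifications into ``it follows that''), and your explicit check of the Peng--Fan bound for $k=0$ is a correct filling-in of a step the paper leaves implicit.
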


Note that the parameters  $\big(n,(q^{2}-1)/n,1;q\big)$  appeared previously in \cite[Corollary 12]{Ding10}.
In comparison, our
method is quite neat and clearer for understanding.

\begin{Remark}\rm
Theorems \ref{construction1}, \ref{construction1'} and \ref{theoremlasttwo}
generate three new families of optimal FHS sets  with respect to the Singleton bound  (\ref{sinbound}),
some of which  are also optimal with respect to the Peng-Fan bound (\ref{PF1}).
Most  previously known optimal FHS sets in the literature are constructed with  respect to the Peng-Fan bound (e.g., see \cite[TABLE I]{Yang}
or \cite[TABLE II]{Zeng12}). Optimal FHS sets with respect to the Singleton bound have been mainly studied in \cite{Ding09}, \cite{Ding10} and \cite{Yang}.
\cite[Corollary 12]{Ding10} gives the first class of FHS sets  simultaneously meeting  the Peng-Fan bound and the Singleton bound.
Theorem \ref{construction1'} gives one more example of such FHS sets.
\end{Remark}

\begin{Example}\rm
Take $q=2^9=512$, then $q+1=513=3^3\times19$. Choose $n=27$, thus $M=12$ and $k=0$.
It follows from Theorem \ref{theoremlasttwo} that  we have a $(27,9709,1;512)$ FHS set, whose parameters
meet both the Peng-Fan bound   (\ref{PF}) and the Singleton bound   (\ref{sinbound}).
\end{Example}

\begin{Example}\rm
Take $q=2^5=32$, then $q+1=33=3\times11$. Choose $n=11$, thus $M=0$ and $0\leq k\leq 4$.
It follows from Theorem \ref{theoremlasttwo} that  we have an $(11,(32^{2k+2}-1)/11,2k+1;32)$ FHS set
for any $0\leq k\leq 4$, whose parameters meet the Singleton bound (\ref{sinbound}).
\end{Example}

{\bf Appendix}

We give a detailed proof of Theorem \ref{p-f-iden}.

{\bf Proof of Theorem \ref{p-f-iden}:}
Let
$$
PF1=\frac{(nN-\ell)n}{(nN-1)\ell}~~~\hbox{and}~~~PF2=\frac{2InN-(I+1)I\ell}{(nN-1)N}.
$$
We want to prove that $\left\lceil PF1\right\rceil=\left\lceil PF2\right\rceil$.
If $nN=\ell$, then $I=1$, thus $\left\lceil PF1\right\rceil=0$ and $\left\lceil PF2\right\rceil=0$.
Hereafter, we assume that $nN>\ell$.

Suppose $nN=\ell I+J$ with $0\leq J<\ell$.
\cite[Proposition 13]{Yang} says that $\left\lceil PF2\right\rceil\geq\left\lceil PF1\right\rceil$,
\begin{equation}\label{yang}
\begin{split}
PF2-PF1&=\frac{2InN-(I+1)I\ell}{(nN-1)N}-\frac{(nN-\ell)n}{(nN-1)\ell}\\
&=\frac{2InN\ell-(I+1)I\ell^2}{(nN-1)N\ell}-\frac{(nN-\ell)nN}{(nN-1)\ell N}\\
&=\frac{2(nN-J)nN-(nN-J+\ell)(nN-J)-(nN-\ell)nN}{(nN-1)\ell N}\\
&=\frac{(\ell-J)J}{(nN-1)\ell N}\\
&\geq0.
\end{split}
\end{equation}
If $J=0$, i.e., $\ell$ is a divisor of $nN$, then $PF1=PF2$, and we are done. We can assume, therefore, that
$\ell$ does not divide  $nN$. In particular, $\ell>1$.
Note that $I=\lfloor nN/\ell\rfloor\geq1$. Since $nN=I\ell+J$ with $J\neq0$,
we have $nN>I\ell$. Clearly, $2I\geq I+1$, which implies that  $2InN-(I+1)I\ell>0$ and  $\left\lceil PF2\right\rceil\geq1$.

We first consider the case    $n\leq\ell$.
In this case, $\lceil PF1\rceil=1$.
Our task is thus to show that $\lceil PF2\rceil=1$. To this end, it is enough to show that
\begin{equation}\label{quadratic}
nN^2-\big(2In+1\big)N+I^2\ell+I\ell>0.
\end{equation}
This is obvious because,   regarding the expression on the left hand side of (\ref{quadratic})  as a quadratic in the variable $N$,
its discriminant $\Delta$ is less than $0$,  i.e.,
\begin{equation*}
\Delta=4\big(I^2n^2-I^2\ell n\big)+4\big(In-In\ell\big)+1
<0.
\end{equation*}
Therefore, we have obtained that
$\lceil PF1\rceil=\lceil PF2\rceil=1$ in the case where   $nN>\ell$ and $n\leq\ell$.

Now let $n=s\ell+r$ with $s>0$ and  $0<r\leq\ell-1$ (recall that $\ell$ is not a divisor of $n$).
Let $rN=t\ell+J$ with $t\geq0$ and  $0<J\leq \ell-1$.
We then see that
\begin{equation}\label{pf1}
PF1=\frac{(nN-1+1-\ell)n}{(nN-1)\ell}=s+\frac{r}{\ell}-\frac{(\ell-1)n}{(nN-1)\ell}.
\end{equation}

Combining  Equations (\ref{yang}) and (\ref{pf1}),
\begin{equation}\label{pf2}
PF2=PF1+\frac{(\ell-J)J}{(nN-1)\ell N}
=s+\frac{r}{\ell}+\frac{(\ell-J)J}{(nN-1)\ell N}-\frac{(\ell-1)n}{(nN-1)\ell}.
\end{equation}

We now consider two subcases separately.

Subcase 1: $\frac{r}{\ell}\leq\frac{(\ell-1)n}{(nN-1)\ell}$, i.e., $r(nN-1)\leq(\ell-1)n.$
In this subcase,
$$
rnN-r=r(nN-1)\leq(\ell-1)n<\ell n-r,
$$
which implies $rN<\ell.$ This means that $t=0$ and $J=rN$. Clearly,
$$
\frac{(\ell-1)n}{(nN-1)\ell}-\frac{r}{\ell}=\frac{(\ell-1)n-r(nN-1)}{(nN-1)\ell}.
$$
It is readily seen that  $(\ell-1)n-r(nN-1)<(nN-1)\ell.$
Thus
$$
-1<\frac{r}{\ell}-\frac{(\ell-1)n}{(nN-1)\ell}\leq0.
$$
This leads to
$
\lceil PF1\rceil=s.
$
In order to show that $\lceil PF2\rceil=\lceil PF1\rceil=s$, by Equation (\ref{pf2}), it suffices to prove that
$$
\frac{r}{\ell}+\frac{(\ell-J)J}{(nN-1)\ell N}-\frac{(\ell-1)n}{(nN-1)\ell}=\frac{rN(nN-1)+(\ell-rN)rN-(\ell-1)nN}{(nN-1)\ell N}\leq0,
$$
or equivalently,
$$
r(nN-1)+(\ell-rN)r-(\ell-1)n
=(n-r)(rN-\ell+1)\\
\leq0.
$$
We are done  because $rN<\ell$ and $n>r$.

Subcase 2: $\frac{r}{\ell}>\frac{(\ell-1)n}{(nN-1)\ell}$.
In this subcase, it is clear that
$
\lceil PF1\rceil=s+1.
$
We claim that
$$
\frac{r}{\ell}+\frac{(\ell-J)J}{(nN-1)\ell N}-\frac{\ell-1}{(nN-1)\ell}\leq1,
$$
then
$
\left\lceil PF2\right\rceil=s+1=\left\lceil PF1\right\rceil
$
and this will complete the proof.

Since
$$
\frac{(\ell-J)J}{(nN-1)\ell N}=
\frac{(\ell+t\ell-rN)(rN-t\ell)}{(nN-1)\ell N}
=\frac{rN\ell+2rNt\ell-r^2N^2-t\ell^2-t^2\ell^2}{(nN-1)\ell N},
$$
it follows that
$$
\frac{r}{\ell}+\frac{(\ell-J)J}{(nN-1)\ell N}-\frac{(\ell-1)n}{(nN-1)\ell}=\frac{rnN^2-rN+rN\ell+2rNt\ell-r^2N^2-t\ell^2-t^2\ell^2-\ell nN+nN}{(nN-1)\ell N}.
$$
We are left to prove that
\begin{equation}\label{equation}
\ell nN^2\geq rnN^2-rN+rN\ell+2rNt\ell-r^2N^2-t\ell^2-t^2\ell^2-\ell nN+nN+\ell N.
\end{equation}
We prove  Inequality (\ref{equation}) by induction on $t\geq0$.
For the base step $t=0$,
\begin{equation*}
\begin{split}
&\ell nN^2-\big(rnN^2-rN+rN\ell-r^2N^2-\ell nN+nN+\ell N\big)\\
&=n(\ell N^2-rN^2)+rN-rN\ell+r^2N^2+N(\ell n-n-\ell)\\
&\geq(\ell+1)(\ell N^2-rN^2)-rN\ell+r^2N^2
\end{split}
\end{equation*}
where the last inequality holds because $n>\ell>r$.
To obtain the desired result, it is enough to show that
$
(\ell+1)(\ell N-rN)-r\ell+r^2N\geq0.
$
By  a simple computation:
$$
(\ell+1)(\ell N-rN)-r\ell+r^2N
\geq\ell^2-2\ell r+\ell-r+r^2
=\big(\ell-r\big)^2+\ell-r
>0.
$$

Now suppose  Inequality (\ref{equation}) holds true for any $t>0$.
For the
inductive step,
\begin{equation*}
rnN^2-rN+rN\ell+2(t+1)rN\ell-r^2N^2-(t+1)\ell^2-(t+1)^2\ell^2-\ell nN+nN+\ell N
\end{equation*}
is equal to
$$
rnN^2-rN+rN\ell+2trN\ell-r^2N^2-t\ell^2-t^2\ell^2-\ell nN+nN+\ell N+(2rN\ell-2\ell^2-2t\ell^2).
$$
Note that
$$
2rN\ell-2\ell^2-2t\ell^2=2\ell\big(rN-\ell t-\ell\big)=2\ell(J-\ell)<0.
$$
By the
inductive hypothesis,
$$
rnN^2-rN+rN\ell+2(t+1)rN\ell-r^2N^2-(t+1)\ell^2-(t+1)^2\ell^2-\ell nN+nN+\ell N\leq\ell nN^2.
$$
This completes the proof.

\noindent{\bf Acknowledgements}
The authors are  grateful to   the three anonymous reviewers for their  helpful  comments and
suggestions that improved an earlier version of this paper. The research of the first and third authors
was partially supported by NTU Research Grant M4080456. The fourth author was supported by NSFC (Grant No. 11171370) and self-determined research funds of CCNU from the colleges's basic research and operation of MOE (Grant No.~CCNU14F01004).
Part of this work was done when B. Chen was with Nanyang Technological University.


\begin{thebibliography}{21}

\bibitem{Cai}
Cai H., Zhou Z., Yang  Y., Tang X.:
A new construction of frequency-hopping sequences with optimal partial Hamming correlation.
IEEE Trans. Inf. Theory, {\bf 60}(9), 5782-5790 (2014).


\bibitem{Chen}
Chen B., Ling S., Zhang  G.:  Application of constacyclic codes to quantum MDS codes.  IEEE Trans. Inf. Theory, {\bf 61}(3), 1474-1484 (2015).


\bibitem{Chu}
Chu W., Colbourn C. J.:  Optimal frequency-hopping sequences
via cyclotomy.  IEEE Trans. Inf. Theory, {\bf 51}(3), 1139-1141 (2005).

\bibitem{Chung14}
Chung J.-H., Gong G., Yang K.: New families of optimal frequency-hopping
sequences of composite lengths. IEEE Trans.
Inf. Theory, {\bf 60}(6), 3688-3697 (2014).

\bibitem{Chung09}
Chung J.-H., Han Y. K., Yang K.:  New classes of optimal frequency-hopping sequences by interleaving techniques.  IEEE Trans.
Inf. Theory, {\bf 55}(12), 5783-5791 (2009).


\bibitem{Chung13}
Chung J.-H., Yang K.:  A new class of balanced near-perfect nonlinear mapping and its application to sequence design.
IEEE Trans. Inf. Theory, {\bf 59}(2), 1090-1097 (2013).



\bibitem{Chung11}
Chung J.-H., Yang K.:  $k$-fold cyclotomy and its application to
frequency-hopping sequences.
IEEE Trans. Inf. Theory, {\bf 57}(4), 2306-2317 (2011).

\bibitem{Chung10}
Chung J.-H., Yang K.: Optimal frequency-hopping sequences
with new parameters.  IEEE Trans. Inf. Theory, {\bf 56}(4), 1685-1693 (2010).

\bibitem{Ding09}
Ding C., Fuji-Hara R., Fujiwara Y., Jimbo M., Mishima M.:  Sets
of frequency hopping sequences: bounds and optimal constructions.
IEEE Trans. Inf. Theory, {\bf 55}(7), 3297-3304 (2009).



\bibitem{Ding07}
Ding C., Moisio M. J., Yuan J.:  Algebraic constructions of optimal
frequency-hopping sequences.  IEEE Trans. Inf. Theory, {\bf 53}(7), 2606-2610 (2007).


\bibitem{Ding10}
Ding C., Yang Y., Tang X.:  Optimal sets
of frequency hopping sequences from linear cyclic codes.
IEEE Trans. Inf. Theory, {\bf 56}(7), 3605-3612 (2010).

\bibitem{Ding08}
Ding C., Yin J.:  Sets of optimal frequency-hopping sequences.
IEEE Trans. Inf. Theory, {\bf 54}(8), 3741-3745 (2008).


\bibitem{Fu}
Fuji-Hara R., Miao Y., Mishima M.:  Optimal frequency hopping
sequences: a combinatorial approach.  IEEE Trans. Inf. Theory,
{\bf 50}(10), 2408-2420 (2004).


\bibitem{Ge}
Ge G., Miao Y., Yao Z. H.:  Optimal frequency hopping sequences:
auto- and cross-correlation properties.  IEEE Trans. Inf. Theory, {\bf 55}(2), 867-879 (2009).



\bibitem{Han}
Han H., Peng D., Liu X.:  New lower bounds on the aperiodic Hamming correlations of frequency hopping sequences with low hit zone.
 Des. Codes Cryptogr.,  {\bf 75}(1), 157-174 (2015).

\bibitem{YKHan}
Han Y. K., Yang K.:  On the Sidel'nikov sequences as frequency-hopping sequences.  IEEE Trans.
Inf. Theory, {\bf 55}(9), 4279-4285 (2009).

\bibitem{Le}
Lemos-Neto J. S., da Rocha Jr. V. C.:  Cyclically permutable codes specified by roots of generator polynomial.
 Electron Lett.,  {\bf 50}(17), 1202-1204 (2014).


\bibitem{Lempel}
Lempel A., Greenberger H.:  Families of sequences with optimal
Hamming correlation properties.  IEEE Trans. Inf. Theory, {\bf 20}, 90-94 (1974).

\bibitem{Ling}
Ling S., Xing C.:  Coding Theory: A First Course. Cambridge University Press, Cambridge (2004).

\bibitem{Liu}
Liu F., Peng D., Zhou Z., Tang T.:  A new frequency-hopping sequence set based upon
generalized cyclotomy. Des. Codes Cryptogr., {\bf 69}(2), 247-259 (2013).

\bibitem{Mabook}
MacWilliams F. J., Sloane N. J. A.: The Theory of Error-Correcting Codes.  North-Holland Publishing Company (1997).

\bibitem{Peng}
Peng D., Fan P.:  Lower bounds on the Hamming auto- and
cross correlations of frequency-hopping sequences.  IEEE Trans. Inf.
Theory, {\bf 50}(9), 2149-2154 (2004).


\bibitem{Ren}
Ren W., Fu  F., Zhou Z.:   New sets of frequency-hopping sequences with optimal Hamming correlation.
 Des. Codes Cryptogr., {\bf 72}(2), 423-434 (2014).

\bibitem{Song}
Song H.-Y., Reed I. S., Golomb S. W.: On the nonperiodic cyclic equivalence classes of Reed-Solomon codes.
IEEE Trans. Inf. Theory, {\bf 39}(4), 1431-1434 (1993).


\bibitem{Yang}
Yang Y., Tang X., Parampalli U., Peng D.:  New bound on frequency hopping sequence sets
and its optimal constructions.  IEEE Trans. Inf.
Theory, {\bf 57}(11), 7605-7613 (2011).

\bibitem{Yang2}
Yang Y., Tang X., Zhou Z.: Optimal frequency hopping sequence sets from MDS codes.
Signal Design and Its Applications in Communications Workshop,  Tokyo, Japan,
162-165 (2013).

\bibitem{Zeng12}
Zeng X., Cai H., Tang X., Yang Y.: A class of optimal frequency hopping
sequences with new parameters. IEEE Trans. Inf.
Theory, {\bf 58}(7), 4899-4907 (2012).

\bibitem{Zeng}
Zeng X., Cai H., Tang X., Yang Y.: Optimal frequency hopping sequences
of odd length. IEEE Trans. Inf.
Theory, {\bf 59}(5), 3237-3248 (2013).

\bibitem{Zhang}
Zhang A., Zhou Z., Feng K.:  A lower bound on the average Hamming correlation of frequency-hopping sequence sets.  Adv. Math. Commun.,
{\bf 9}(1), 55-62 (2015).


\bibitem{Zhou}
Zhou Z., Tang X., Peng D., Parampalli U.:  New constructions for optimal sets of frequency-hopping
 sequences.  IEEE Trans. Inf. Theory, {\bf 57}(6), 3831-3840 (2011).

\bibitem{Zhou2}
Zhou Z.,  Tang X.,  Niu X., Parampalli U.: New classes of frequency-hopping sequences with
optimal partial correlation.  IEEE Trans. Inf. Theory, {\bf 58}(1), 453-458 (2012).



\end{thebibliography}
\end{document}